\def\@begintheorem#1#2{\list{}{\thm@body}%
  \item[]{\bf #1~#2.}\quad\it\ignorespaces}
\def\@opargbegintheorem#1#2#3{\list{}{\thm@body}%
  \item[]{\bf #1~#2~\ifrembrks #3\global\rembrksfalse\else (#3)\fi.}%
  \quad\it\ignorespaces}
\def\@endtheorem{\endlist}
\newtheorem{theorem}{Theorem}[section]
\newtheorem{corollary}[theorem]{Corollary}
\newtheorem{lemma}[theorem]{Lemma}
\begin{document}

\title{Competitive learning of monotone Boolean functions}

\author{Sascha Kurz}
\address{University of Bayreuth, Department of Mathematics, D-95440 Bayreuth, Germany, sascha.kurz@uni-bayreuth.de}

\begin{abstract}
  We apply competitive analysis onto the problem of minimizing the number of queries to an oracle to completely reconstruct a given monotone Boolean function. Besides
   lower and upper bounds on the competitivity we determine optimal deterministic online algorithms for the smallest problem instances.
   
   \medskip
   
   \noindent
   \textbf{Keywords:} monotone Boolean functions, exact learning algorithms, competitive analysis\\
   \textbf{MSC:} 68W27, 06E30, 68Q32  
\end{abstract}

\maketitle

\section{Introduction}
\noindent
Complex systems with many components and resource restrictions can crash if certain combinations of the components are simultaneously active. As a demonstrative example one can imagine a computer system with different software packages. For $n$~software packages there are $2^n$ combinations where the computer can either crash or work properly. The behavior of the computer system can be described by a Boolean function $g:\{0,1\}^n\rightarrow \{0,1\}$. Suppose we have the possibility to evaluate the underlying Boolean function at arbitrary points. In practice we might think of asking an expert or performing an experiment like simply running the respective set of software packages. In our abstract setting we speak of asking a question. For an arbitrary Boolean function in any case $2^n$ questions are necessary (and sufficient if all questions are pairwise different) to unveil the entire function. Fortunately in many applications we can assume some restrictions. In our example it is quite reasonable to assume some kind of monotonicity. If the computer crashes for a certain subset $S\subseteq N:=\{1,\dots,n\}$ of the programs we can assume that it also crashes for every superset $N\supseteq T\supseteq S$ of the programs. Similarly, if the computer works properly for a set $S\subseteq N$ then it should also work properly for every subset $T\subseteq S$. So we restrict the underlying function to the class of monotone Boolean functions. For the ease of notation we write a Boolean function as $f:2^N\rightarrow \{0,1\}$ in the following, where $2^N$ denotes the set of subsets of~$N$. 

Since asking questions or performing experiments can be quite expensive one naturally tries to minimize the number of necessary questions. Different concepts like worst case or average case analysis have been applied on this problem so far. As shown shown by Engel \cite{0868.05001} up to ${n \choose {\left\lfloor\frac{n}{2}\right\rfloor}}+{n \choose {\left\lfloor\frac{n}{2}\right\rfloor+1}}$ questions are necessary to uniquely verify the worst case examples. There are algorithms, see e.g.\ \cite{Hansel}, which achieve this unavoidable worst case bound. Since there are monotone Boolean functions which can be uniquely verified asking a single question, those, with respect to worst case analysis, optimal algorithms might not be adequate in all practical applications. Thus there are studies in the literature minimizing the average number of necessary questions while assuming an uniform distribution of the possible function, see e.g.\ \cite{Torvik1}. In this paper we want to study exact learning of monotone Boolean functions using competitive analysis. This concept has the big advantage that no assumptions on the distribution of the occurring functions are necessary.

\section{Preliminaries}
\label{sec_preliminaries}

\noindent
We call a function $f:2^N\rightarrow\{0,1\}$ a monotone Boolean function if $f(S)=1$ implies $ f(T)=1$ for all $N\subseteq T\subseteq S$ and $f(S)=0$ implies $f(T)=0$ for all $T\subseteq S\subseteq N$. The set $\mathcal{L}$ of the maximal lower sets consists of the sets $S\subseteq N$ with $f(S)=0$ where $f(T)=1$ for all proper supersets of $S$. Similarly the set $\mathcal{U}$ of the minimal upper sets consists of the sets $S\subseteq N$ with $f(S)=1$ where $f(T)=0$ for all proper subsets of $S$. We would like to remark that either $\mathcal{L}$ or $\mathcal{U}$ suffice to uniquely characterize $f$ within the class of Boolean functions, i.e.\ we have $f(S)=1$ if and only if there exists an $U\in\mathcal{U}$ with $U\subseteq S$ or $f(T)=0$ if and only if there exists an $L\in\mathcal{L}$ with $T\subseteq L$.

Via asking $f(S)$ we do not get the information whether $S$ is an inclusion-maximal lower or an inclusion-minimal upper set but only $f(S)=1$ or $f(S)=0$. So in general asking the sets from $\mathcal{L}$ or $\mathcal{U}$ is not sufficient. 

\begin{lemma}
  A monotone Boolean function $f$ is uniquely characterized if and only if all values of $\mathcal{L}$ and $\mathcal{U}$ are given. \\[-7mm]
\end{lemma}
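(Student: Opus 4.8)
The plan is to restate the claim in terms of the set $Q\subseteq 2^N$ of subsets whose $f$-value has actually been queried. Saying that $f$ is uniquely characterized means precisely that the only monotone Boolean function $g$ with $g(S)=f(S)$ for every $S\in Q$ is $f$ itself, and the assertion to prove is that this holds if and only if $\mathcal{L}\cup\mathcal{U}\subseteq Q$. I would establish the two implications separately, each by a short direct argument.

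For the direction ``$\mathcal{L}\cup\mathcal{U}\subseteq Q$ implies uniqueness'', let $g$ be any monotone Boolean function agreeing with $f$ on $Q$; I would show $g(S)=f(S)$ for all $S\subseteq N$ by using the characterization of $f$ through $\mathcal{L}$ and $\mathcal{U}$ recalled in Section~\ref{sec_preliminaries}. If $f(S)=1$, pick $U\in\mathcal{U}$ with $U\subseteq S$; since $U\in Q$ we have $g(U)=f(U)=1$, so monotonicity of $g$ forces $g(S)=1$. Symmetrically, if $f(S)=0$, pick $L\in\mathcal{L}$ with $S\subseteq L$; then $L\in Q$ gives $g(L)=0$, and monotonicity forces $g(S)=0$. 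Hence $g=f$. (The degenerate functions $f\equiv 0$ and $f\equiv 1$, with $\mathcal{L}=\{N\},\mathcal{U}=\emptyset$ respectively $\mathcal{L}=\emptyset,\mathcal{U}=\{\emptyset\}$, are automatically covered.)

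For the converse I would argue by contraposition, constructing a second consistent monotone function whenever some element of $\mathcal{L}\cup\mathcal{U}$ is absent from $Q$. If $U_0\in\mathcal{U}\setminus Q$, define $g$ to coincide with $f$ everywhere except $g(U_0):=0$. Then $g\neq f$ and $g$ agrees with $f$ on $Q$, so it only remains to verify that $g$ is monotone. Its set of $0$-inputs is that of $f$ together with $U_0$, and it stays downward closed because every proper subset of $U_0$ already has $f$-value $0$ by minimality of $U_0$ in $\mathcal{U}$; dually, its set of $1$-inputs is that of $f$ with $U_0$ removed, and it stays upward closed because no proper subset of $U_0$ is a $1$-input, again by minimality. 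The case $L_0\in\mathcal{L}\setminus Q$ is symmetric: set $g(L_0):=1$ and invoke maximality of $L_0$ in $\mathcal{L}$.

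The only step that needs care is this last construction. The naive idea of deleting $U_0$ from the generating family $\mathcal{U}$ of $f$ does not work, since a queried proper superset of $U_0$ might have had $U_0$ as its unique minimal $1$-witness, making the modified function disagree with $f$ on $Q$. Flipping the value at the single point $U_0$ sidesteps this, and checking that a single point flip preserves monotonicity is exactly where the extremality of the members of $\mathcal{L}$ and $\mathcal{U}$ is used; the rest is routine.
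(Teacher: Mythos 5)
Your proof is correct and, for the direction the paper actually proves (necessity of querying every element of $\mathcal{L}\cup\mathcal{U}$), it uses exactly the paper's construction: flip the value of $f$ at the single unqueried extremal set and use minimality in $\mathcal{U}$ (respectively maximality in $\mathcal{L}$) to check that the flipped function is still monotone. You additionally write out the sufficiency direction via the witness characterization $f(S)=1\Leftrightarrow\exists\, U\in\mathcal{U}$ with $U\subseteq S$ (dually for $\mathcal{L}$), a half that the paper omits and implicitly delegates to the remark in Section~\ref{sec_preliminaries}; that extra argument is also correct, including your handling of the constant functions.
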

\begin{proof}
  Suppose there is a single $U\in\mathcal{U}$ whose value is not known. We construct a Boolean function $f'$ by setting $f'(S)=f(S)$ for all subsets $S\neq U$ and
  $f'(U)=0$. This function is also monotone since for all proper subsets $T\subsetneq U$ we have $f'(T)=f(T)=0$. If there is a single $L\in\mathcal{L}$ whose value is
  not known we can consider the monotone Boolean function $f''$ with $f''(S)=f(S)$ for all $S\neq L$ and $f''(L)=1$.\hfill{$\square$}
\end{proof}

So let us denote by $m(f)$ the cardinality $\left|\mathcal{U}\cup\mathcal{L}\right|$, i.e.\ a lower bound for each deterministic algorithm to reconstruct $f$ via asking questions. As notation for those algorithms we use binary decision trees, where we use the questions as nodes and where the answer ``$0$'' corresponds to the left successor:

\smallskip

\Tree [.$\{1\}$ [.$\{1,2\}$ - $\{2\}$ ] [.$\emptyset$ $\{2\}$ [.- ] ] ]

\medskip

\noindent
In this example for $n=2$ the first question is $\{1\}$. Once the answer is $0$ the algorithm continues by asking $\{1,2\}$. If the answer is again $0$ then there is only one possible monotone Boolean function (the  all-zero function $f_0$: $f_0(S)=0$ for all $S\subseteq N$) left and we have reconstructed $f$. By $A(f)$ we denote the number of questions asked by algorithm $A$ to reconstruct $f$. In our example we have $A(f)=2$ while $m(f)=1$ would be possible for the optimal algorithm. In competitive analysis the fraction $\frac{A(f)}{m(f)}$ is studied.

We call a deterministic algorithm~$A$ reconstructing an $n$-variable monotone Boolean function $c$-competitive for a real number $c\ge 1$ if
$\frac{A(f)}{m(f)}\le c$ for all $f\in\mathcal{M}_n$, where $\mathcal{M}_n$ denotes the set of monotone Boolean functions on $n$ variables.
The best possible competitivity is denoted by $c_n^\star$, i.e.\ the infimum of the possible $c$ for $c$-competitive algorithms on $n$-variables.

Let us at first comment on the competitivity of some classical learning algorithms. The Hansel's algorithm \cite{Hansel} behaves very badly using this measure, e.g.\ for the all-one function $f_1$ ($f_1(S)=1$ for all $S\subseteq N$) with $m(f_1)=1$ at least ${n \choose {\left\lfloor\frac{n}{2}\right\rfloor}}$ questions are asked (there is some freedom in the definition of Hansel's algorithm), see e.g.\ \cite{Torvik1}. Thus Hansel's algorithm is not $c$-competitive for $c<{n \choose {\left\lfloor\frac{n}{2}\right\rfloor}}$ while being worst-case optimal.

Another algorithm for learning a monotone Boolean function is the so called \textsc{Find-Border} algorithm of Gainanov \cite{find_border} (which is used as a subroutine in several other learning algorithms). In each iteration an element of $\mathcal{U}$ is determined and verified using at most $n+1$ questions. Thus the \textsc{Find-Border} is $n+1$-competitive for all $n\in\mathbb{N}$. We would like to remark that a refined analysis shows that at most $n\cdot|\mathcal{U}|+1+|\mathcal{L}|$ questions are asked and that it can be slightly adopted to yield an $n$-competitive algorithm for $n\ge 2$. (If the elements of $\mathcal{L}$ are iteratively determined then at most $n\cdot|\mathcal{L}|+1+|\mathcal{U}|$ questions are asked.)

The enumeration of the set $\mathcal{M}_n$ is a classical combinatorial problem known as Dedekind's problem \cite{28.0186.04}. So far the exact numbers could be determined only up to $n=8$ and are given by $3$, $6$, $20$, $168$, $7\,581$, $7\,828\,354$, $2\,414\,682\,040\,998$, and $56\,130\,437\,228\,687\,557\,907\,788$, see e.g.\ \cite{1072.06008}. To factor out symmetry we call two monotone Boolean functions $f$ and $g$ equivalent if there is a bijection $\sigma$ on $N$ such that $f(S)=g(\sigma(S))$ for all $S\subseteq N$. The number of inequivalent monotone Boolean functions (or orbits) are given by $3$, $5$, $10$, $30$, $210$, $16353$, see e.g.\ \cite{OEIS} , but grow nevertheless double exponentially.
 
\section{Lower bounds on the optimal competitivity}

\noindent
Based on the fact that the all-zero function $f_0$ and the all-one function $f_1$ need only one question to be completely reconstructed, we can state $c_n^\star\ge 2$ for all $n\in\mathbb{N}$. By $b_i(n)$ we denote the number of monotone Boolean functions on $n$ variables with $m(f)=i$. Since the answer to each question splits the set of monotone Boolean functions which are compatible with the answers so far into two subsets of remaining candidates, we have:

\begin{lemma}
  $$c_n^\star\ge \frac{\left\lceil\log_2\left(\sum\limits_{j=1}^{i} b_j(n)\right)\right\rceil}{i}\quad\forall i\ge 1.$$
\end{lemma}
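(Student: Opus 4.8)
The plan is an adversary counting argument applied to an arbitrary deterministic algorithm. Fix any real $c\ge 1$ for which a $c$-competitive algorithm $A$ exists, and view $A$ as its binary decision tree. Since $A$ reconstructs every $f\in\mathcal{M}_n$, each such $f$ follows a unique root-to-leaf path, namely the one dictated by its own answers, and two different monotone Boolean functions must reach different leaves: a leaf outputs a single function, so two functions agreeing on every query along a common path would force $A$ to be incorrect for at least one of them. Hence the map $f\mapsto\mathrm{leaf}(f)$ is injective on $\mathcal{M}_n$, and the depth of $\mathrm{leaf}(f)$ equals $A(f)$.

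First I would restrict to the family $\mathcal{F}_i$ of monotone Boolean functions with $m(f)\le i$. Since $m(f)\ge 1$ for every monotone Boolean function (both $\mathcal{L}$ and $\mathcal{U}$ are never simultaneously empty, as witnessed by the all-zero and all-one functions), we have $|\mathcal{F}_i|=\sum_{j=1}^{i}b_j(n)$. For $f\in\mathcal{F}_i$, $c$-competitiveness gives $A(f)\le c\,m(f)\le ci$, and as $A(f)$ is an integer this yields $A(f)\le\lfloor ci\rfloor$. Thus the injective image of $\mathcal{F}_i$ consists of leaves of the decision tree all lying at depth at most $d:=\lfloor ci\rfloor$.

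Next I would bound the number of such leaves. Leaves of a binary tree are pairwise incomparable; indexing nodes by binary strings and charging each leaf at depth $\delta\le d$ with its $2^{d-\delta}$ length-$d$ extensions produces pairwise disjoint sets inside the $2^d$ strings of length $d$ (a Kraft-type inequality), so there are at most $2^d$ leaves at depth $\le d$. Therefore $\sum_{j=1}^{i}b_j(n)\le 2^{\lfloor ci\rfloor}$. Taking logarithms, $\lfloor ci\rfloor\ge\log_2(\sum_{j=1}^{i}b_j(n))$, and since the left-hand side is an integer it is at least $\lceil\log_2(\sum_{j=1}^{i}b_j(n))\rceil$. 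Hence $ci\ge\lceil\log_2(\sum_{j=1}^{i}b_j(n))\rceil$, i.e.\ $c\ge\lceil\log_2(\sum_{j=1}^{i}b_j(n))\rceil/i$. As this holds for every $c$ admitting a $c$-competitive algorithm, passing to the infimum over such $c$ gives $c_n^\star\ge\lceil\log_2(\sum_{j=1}^{i}b_j(n))\rceil/i$, and $i\ge 1$ was arbitrary.

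The routine part is the rounding bookkeeping with $\lfloor\cdot\rfloor$ and $\lceil\cdot\rceil$. The one point that deserves care is the leaf-count bound $2^{\lfloor ci\rfloor}$ together with the injectivity of $f\mapsto\mathrm{leaf}(f)$ and the identity $\mathrm{depth}(\mathrm{leaf}(f))=A(f)$, since this is precisely what converts ``many functions reconstructed within $ci$ queries'' into ``many leaves of bounded depth''. I do not anticipate a genuine obstacle beyond stating these facts cleanly.
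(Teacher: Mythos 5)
Your argument is correct and is exactly the argument the paper intends: the lemma is justified there by the single observation that each binary answer splits the set of remaining candidates in two, which is precisely your leaf-counting/Kraft bound showing that the $\sum_{j=1}^{i}b_j(n)$ functions with $m(f)\le i$ need $\lceil\log_2(\sum_{j=1}^{i}b_j(n))\rceil$ questions in the worst case while a $c$-competitive algorithm may spend at most $ci$ on them. Your write-up just makes the injectivity of $f\mapsto\mathrm{leaf}(f)$ and the integer-rounding steps explicit, which the paper leaves implicit.
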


\begin{lemma}
  $$ b_{i+1}(n)\ge {n\choose i}\quad \forall 1\le i\le n.$$
\end{lemma}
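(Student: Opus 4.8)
The plan is to exhibit, for every $i$-element subset $A\subseteq N$, one monotone Boolean function $f_A$ with $m(f_A)=i+1$, and to check that distinct subsets give distinct functions. Since there are $\binom{n}{i}$ such subsets, this yields $b_{i+1}(n)\ge\binom{n}{i}$.

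Concretely, I would define $f_A\colon 2^N\to\{0,1\}$ by $f_A(S)=1$ if and only if $A\subseteq S$. This is monotone, because $A\subseteq S\subseteq T$ implies $A\subseteq T$; and since $1\le i$ the set $A$ is nonempty, so $f_A$ is not the all-one function. Next I would identify $\mathcal{U}$ and $\mathcal{L}$ for $f_A$. The set $A$ satisfies $f_A(A)=1$ while every proper subset $T\subsetneq A$ has $f_A(T)=0$, so $A\in\mathcal{U}$; and any $U\in\mathcal{U}$ must contain $A$, whence minimality forces $U=A$. Thus $\mathcal{U}=\{A\}$. For the maximal lower sets, note that a set $L$ with $f_A(L)=0$ omits some $a\in A$, so $L\subseteq N\setminus\{a\}$; and each $N\setminus\{a\}$ with $a\in A$ is itself a lower set whose only proper superset is $N$, where $f_A(N)=1$. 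Hence $\mathcal{L}=\{N\setminus\{a\}:a\in A\}$, which has exactly $|A|=i$ elements, and therefore $m(f_A)=|\mathcal{U}|+|\mathcal{L}|=1+i$.

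Finally, since $A$ is recovered from $f_A$ as its unique minimal upper set, the assignment $A\mapsto f_A$ is injective on the $i$-element subsets of $N$, producing $\binom{n}{i}$ distinct monotone Boolean functions with $m$-value $i+1$, which proves the claim. The only place demanding a little care is the exact determination of $\mathcal{L}$, i.e.\ ruling out maximal lower sets other than the $N\setminus\{a\}$ with $a\in A$; but this is immediate once one observes that every lower set of $f_A$ is contained in some such $N\setminus\{a\}$. I do not anticipate any genuine obstacle beyond this bookkeeping.
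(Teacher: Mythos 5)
Your proof is correct and is essentially the paper's argument in dual form: the paper takes, for each $i$-set $S$, the function with unique maximal lower set $N\setminus S$ (so $|\mathcal{L}|=1$, $|\mathcal{U}|=i$), while you take the function with unique minimal upper set $A$ (so $|\mathcal{U}|=1$, $|\mathcal{L}|=i$); both give $\binom{n}{i}$ distinct functions with $m(f)=i+1$. Your bookkeeping of $\mathcal{U}$ and $\mathcal{L}$ is accurate, so there is nothing to fix.
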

\begin{proof}
  Let $S$ be an arbitrary $i$-element subset of $n$. For the monotone Boolean function with unique maximal lower set $N\backslash S$ the minimal upper sets correspond
  to the elements of $S$.\hfill{$\square$}
\end{proof}

\begin{corollary}
  For each $\varepsilon>0$ there is a $n_0(\varepsilon)$ such that $c_n^\star\ge(1-\varepsilon)\log_2 n$ for all $n\ge n_0(\varepsilon)$.
\end{corollary}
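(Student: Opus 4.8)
The plan is to feed the binomial lower bound on $b_{i+1}(n)$ into the entropy-type lower bound on $c_n^\star$ and then optimize the free index. First I would note that, since all $b_j(n)$ are nonnegative, for every $k$ with $2\le k\le n+1$ we have $\sum_{j=1}^{k}b_j(n)\ge b_k(n)$, and applying the lemma $b_{i+1}(n)\ge\binom{n}{i}$ with $i=k-1$ gives $b_k(n)\ge\binom{n}{k-1}$. Substituting $i=k$ into the lemma
$$c_n^\star\ge\frac{\left\lceil\log_2\left(\sum_{j=1}^{i}b_j(n)\right)\right\rceil}{i}$$
and dropping the ceiling, I obtain $c_n^\star\ge\frac{1}{k}\log_2\binom{n}{k-1}$ for all such $k$.

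Next I would use the elementary inequality $\binom{n}{m}\ge(n/m)^m$ to pass to
$$c_n^\star\ge\frac{k-1}{k}\bigl(\log_2 n-\log_2(k-1)\bigr),$$
valid for $2\le k\le n+1$. The remaining task is to choose $k=k(n)$ so that the prefactor $\tfrac{k-1}{k}$ tends to $1$ while the error term $\log_2(k-1)$ stays negligible against $\log_2 n$. The choice $k=\left\lceil\log_2 n\right\rceil+1$ does both at once (and satisfies $k\le n+1$ once $n\ge 2$): then $\tfrac{k-1}{k}=1-\tfrac{1}{\left\lceil\log_2 n\right\rceil+1}\to 1$ and $\log_2(k-1)=\log_2\!\left\lceil\log_2 n\right\rceil=o(\log_2 n)$, whence, as $n\to\infty$,
$$\frac{c_n^\star}{\log_2 n}\ge\left(1-\frac{1}{\left\lceil\log_2 n\right\rceil+1}\right)\left(1-\frac{\log_2\!\left\lceil\log_2 n\right\rceil}{\log_2 n}\right)\to 1.$$

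Given $\varepsilon>0$, it then suffices to take $n_0(\varepsilon)$ large enough that the right-hand side above exceeds $1-\varepsilon$ for all $n\ge n_0(\varepsilon)$; this is an explicit though slightly tedious computation amounting to solving $\tfrac{1}{\lceil\log_2 n\rceil+1}+\tfrac{\log_2\lceil\log_2 n\rceil}{\log_2 n}\le\varepsilon$. I do not expect a genuine obstacle here: the only point requiring care is the balancing in the choice of $k$ --- a constant $k$ would kill the factor $\tfrac{k-1}{k}$, while $k$ as large as $n^{\delta}$ would make $\log_2(k-1)$ a fixed fraction of $\log_2 n$, so the logarithmic growth $k\asymp\log_2 n$ is the sweet spot. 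One could instead keep the full sum $\sum_{j=0}^{k-1}\binom{n}{j}$ before taking logarithms, but this does not change the leading constant and is not worth the extra bookkeeping.
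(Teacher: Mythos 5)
Your proposal is correct and takes essentially the same route the paper intends: the corollary is stated without proof immediately after the two lemmas, and the only possible argument is precisely yours, namely feeding $b_k(n)\ge\binom{n}{k-1}$ into the information-theoretic bound and letting the index grow like $\log_2 n$. Your choice $k=\left\lceil\log_2 n\right\rceil+1$ together with $\binom{n}{m}\ge(n/m)^m$ carries this out correctly, including the range check $2\le k\le n+1$ needed to invoke both lemmas.
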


\begin{lemma}
  If $U\in\mathcal{U}$ then $|\mathcal{L}|\ge |U|$.
\end{lemma}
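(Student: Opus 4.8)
The plan is to build an explicit injection from $U$ into $\mathcal{L}$. Since $U$ is an inclusion-minimal set with $f(U)=1$, every proper subset of $U$ is a lower set. In particular, for each $i\in U$ the set $U\setminus\{i\}$ satisfies $f(U\setminus\{i\})=0$, so it is contained in at least one maximal lower set; fix such a set and call it $L_i\in\mathcal{L}$. (If $U\neq\emptyset$ this already shows $\mathcal{L}\neq\emptyset$, so there is nothing pathological about the choice.)

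The heart of the argument is to show that the assignment $i\mapsto L_i$ is injective. Suppose to the contrary that $L_i=L_j=:L$ for two distinct indices $i,j\in U$. Because $i\neq j$, one has the set identity $(U\setminus\{i\})\cup(U\setminus\{j\})=U$, and both sets on the left-hand side are subsets of $L$; hence $U\subseteq L$. Now monotonicity applied in the direction ``$f$ vanishes on a set implies it vanishes on every subset'' gives $f(U)\le f(L)=0$, i.e.\ $f(U)=0$, contradicting $U\in\mathcal{U}$. Therefore the $L_i$, $i\in U$, are pairwise distinct elements of $\mathcal{L}$, which yields $|\mathcal{L}|\ge|U|$.

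Finally I would dispose of the degenerate case $U=\emptyset$: then $f(\emptyset)=1$, so $f$ is the all-one function, $\mathcal{L}=\emptyset$, and the claimed inequality reads $0\ge 0$, which holds. In every other case $|U|\ge 1$ and the injection above applies verbatim.

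The only point requiring any care — and it is a minor one — is the union identity $(U\setminus\{i\})\cup(U\setminus\{j\})=U$ for $i\neq j$, together with invoking monotonicity in the correct direction (a zero on a superset propagates down to the subset). Apart from that the proof is entirely routine, with no real obstacle.
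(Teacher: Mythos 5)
Your proof is correct and is essentially the paper's argument: the paper states that each $L\in\mathcal{L}$ can contain at most one of the sets $U\setminus\{i\}$ (since containing two would force $U\subseteq L$ and contradict $f(L)=0$), which is exactly the injectivity claim you establish for $i\mapsto L_i$. Your version just spells out the choice of $L_i$ and the degenerate case $U=\emptyset$ more explicitly.
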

\begin{proof}
  Each $L\in\mathcal{L}$ can at most contain one set $U\backslash\{i\}$ with $i\in U$ as a subset.\hfill{$\square$}
\end{proof}

Using this one can easily determine $b_1(n)=2$, $b_2(n)=n$, $b_3(n)=2{n\choose 2}$, and $b_4(n)=8{n\choose 3}$. 

\begin{lemma}
 $$c_{n+1}^\star\ge c_n^\star.$$
\end{lemma}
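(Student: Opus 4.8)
The plan is to embed the $n$-variable learning problem into the $(n+1)$-variable one in a way that preserves the hardness measure $m(\cdot)$, and then to run any competitive $(n+1)$-variable algorithm as a subroutine for $n$ variables. For $f\in\mathcal{M}_n$ on $N=\{1,\dots,n\}$, I would define its extension $\tilde f$ on $N'=\{1,\dots,n+1\}$ by $\tilde f(S)=f(S\cap N)$. Since $T\subseteq S\subseteq N'$ implies $T\cap N\subseteq S\cap N$, the function $\tilde f$ is again monotone, i.e.\ $\tilde f\in\mathcal{M}_{n+1}$.

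The first real step is to show $m(\tilde f)=m(f)$. Because $\tilde f(S)=\tilde f(S\setminus\{n+1\})$ whenever $n+1\in S$, no minimal upper set of $\tilde f$ can contain $n+1$, and checking the two defining conditions shows that the minimal upper sets of $\tilde f$ are exactly those of $f$. Dually, $\tilde f(S)=\tilde f(S\cup\{n+1\})$ whenever $n+1\notin S$, so every maximal lower set of $\tilde f$ contains $n+1$, and the maximal lower sets of $\tilde f$ are precisely the sets $L\cup\{n+1\}$ with $L$ a maximal lower set of $f$. As the families $\mathcal{U}$ and $\mathcal{L}$ are disjoint, $m(\cdot)=|\mathcal{U}|+|\mathcal{L}|$, and the above gives $m(\tilde f)=m(f)$.

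Now fix a real number $c>c_{n+1}^\star$ and a $c$-competitive deterministic algorithm $A'$ for $(n+1)$-variable monotone Boolean functions. I would define a deterministic algorithm $A$ for $n$ variables that simulates $A'$ against the target $\tilde f$: whenever $A'$ poses a query $S\subseteq N'$, algorithm $A$ answers with $f(S\cap N)$, a value it obtains from its own oracle, reusing previously obtained values so that $A$ makes at most $A'(\tilde f)$ oracle calls. When $A'$ stops it has determined $\tilde f$, hence $f$. Consequently $A(f)\le A'(\tilde f)\le c\cdot m(\tilde f)=c\cdot m(f)$ for every $f\in\mathcal{M}_n$, so $A$ is $c$-competitive and $c_n^\star\le c$. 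Taking the infimum over all $c>c_{n+1}^\star$ yields $c_n^\star\le c_{n+1}^\star$.

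The two structural computations in the second paragraph are the only place where anything needs to be verified carefully; the mild subtleties in the simulation are that several queries of $A'$ may collapse to a single query of $f$ (harmless once $A$ reuses answers) and that one must argue through the infimum, since $c_{n+1}^\star$ need not be attained by any single algorithm.
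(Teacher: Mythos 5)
Your proposal is correct and follows essentially the same route as the paper: extend $f$ to $\tilde f(S)=f(S\cap N)$ (the paper's $g$), observe $\mathcal{U}(\tilde f)=\mathcal{U}(f)$ and $\mathcal{L}(\tilde f)=\{L\cup\{n+1\}\mid L\in\mathcal{L}(f)\}$ so that $m(\tilde f)=m(f)$, and simulate an $(n+1)$-variable algorithm by answering each query $S$ with $f(S\cap N)$. Your write-up is in fact more careful than the paper's (explicit treatment of collapsing queries and of the infimum), but the underlying argument is identical.
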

\begin{proof}
  Let $A$ be a deterministic online algorithm for $n+1$ variables. We can obtain an online algorithm $A'$ for $n$ variables by adjusting each question $S$ to
  $S\backslash\{n+1\}$ and slightly adapting the final output.
  
  Let $f$ be an arbitrary monotone Boolean function on $n$ variables and $g$ be a monotone Boolean function on $n+1$ variables defined via
  $f(S)=g(S)=g(S\cup\{n+1\})$ for all $S\subseteq \{1,\dots,n\}$. Due to $\mathcal{U}(g)=\{U\mid U\in\mathcal{U}(f)\}$ and
  $\mathcal{L}(g)=\{L\cup\{n+1\}\mid L\in\mathcal{L}(f)\}$ we have $m(g)=m(f)$.\hfill{$\square$}
\end{proof}

\section{Optimal algorithms for small $n$}

\noindent
We call a deterministic online algorithm to reconstruct a monotone Boolean function or its corresponding binary decision tree reasonable if only sets are asked whose function value cannot be deduced from previous answers. For $n=1$ variable there are only two reasonable binary decision trees, each having a competitivity of $c_1^\star=2$. For $n=2$ variables an example with competitivity $c_2^\star=2$ is given in Section~\ref{sec_preliminaries}.

Our next aim is to prove that every deterministic online algorithm for $n=3$ variables has a competitivity of at least $\frac{5}{2}$. To conclude a lower bound on the competitivity it suffices to give a sequence of answers to the questions of the algorithm that are compatible with a monotone Boolean function. By choosing a suitable sequence of answers for a given deterministic algorithm we can conclude the tight lower bound. Since we use only a path of the binary decision tree, consisting of the sequence on questions, the same sequence of answers results in the same lower bound for a large set of binary decision trees. We can further reduce the set of candidates of paths by utilizing symmetry. Therefore we call two such paths $P_1=(S_1,\dots,S_l)$ and $P_2=(T_1,\dots,T_l)$, where $S_1,\dots,S_l,T_1,\dots, T_l\subseteq N$, equivalent if there is a bijection $\sigma$ of $N$ fulfilling $\sigma(S_i)=T_i$ for all $1\le i\le l$. It suffices to consider inequivalent paths only, e.g.\ we can assume that the first question is either $\emptyset$, $\{1\}$, $\{1,2\}$, or $\{1,2,3\}$.

Suppose that the answer to this first question $S_1$ is one if $|S_1|\ge 2$ and zero otherwise, then $S_2\in\{\emptyset,N\}$ since otherwise we could not exclude the all-one function $f_1$ or the all-zero function $f_0$ and would end up with an algorithm having a competitivity of at least $3$. We can abstain from further considering the initial path segments $(\emptyset,N)$ and $(N,\emptyset)$ since the initial path segments $(\{1\},N)$ and $(\{1,2\},\emptyset)$ yield more information about the unknown Boolean function. In Figure~\ref{fig_lower_n3} we depict the remaining part of our argument graphically. Each vertex is labeled with question and answer. At the leafs we further specify a compatible monotone Boolean function with $m(f)=2$ via $\mathcal{U}$ and $\mathcal{L}$. Since each leaf has height $3$ and the corresponding elements of $\mathcal{U}$ and $\mathcal{L}$ have not been asked so far, $\frac{3+2}{2}$ is a lower bound for the competitivity in each case. 

\begin{figure}[htp]
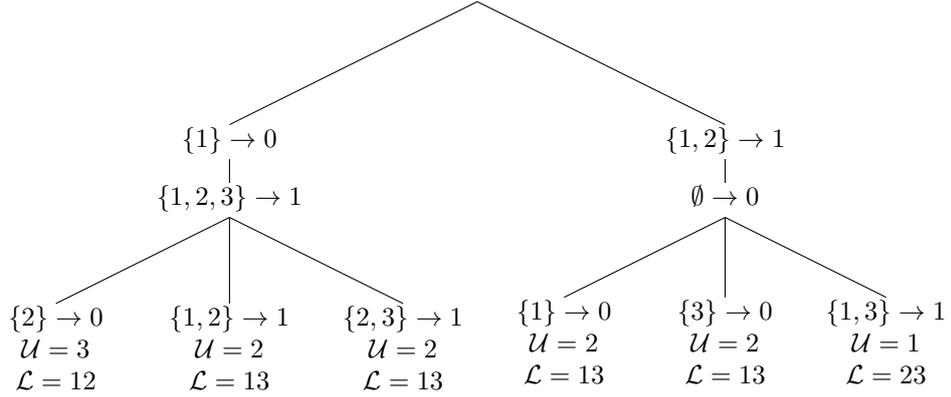

\begin{center}
\Tree [ [.$\{1\}\rightarrow 0$ [.$\{1,2,3\}\rightarrow 1$ $\begin{array}{c}\{2\}\rightarrow 0\\\mathcal{U}=3\\\mathcal{L}=12\end{array}$ $\begin{array}{c}\{1,2\}\rightarrow 1\\\mathcal{U}=2\\\mathcal{L}=13\end{array}$ $\begin{array}{c}\{2,3\}\rightarrow 1\\\mathcal{U}=2\\\mathcal{L}=13\end{array}$ ] ] [.$\{1,2\}\rightarrow 1$ [.$\emptyset\rightarrow 0$ $\begin{array}{c}\{1\}\rightarrow 0\\\mathcal{U}=2\\\mathcal{L}=13\end{array}$ $\begin{array}{c}\{3\}\rightarrow 0\\\mathcal{U}=2\\\mathcal{L}=13\end{array}$ $\begin{array}{c}\{1,3\}\rightarrow 1\\\mathcal{U}=1\\\mathcal{L}=23\end{array}$ ] ] ]
\caption{Lower bound for three variables.}
\label{fig_lower_n3}
\end{center}
\end{figure}

For the other direction we describe a whole class of $\frac{5}{2}$-competitive online algorithms in Figure~\ref{fig_upper_n3}. As for a binary decision tree the internal vertices are labeled with the questions of the algorithm. The leafs are either labeled with $[c]$ or $[u,k,c]$. In the first case there is only a unique monotone Boolean function being compatible with the previous answers left, so that this path is $c$-competitive. In the later case there are only $u$ unclassified sets, i.e.\ sets whose value cannot be deduced from the previous answers and the minimum $m(f)$ of the remaining compatible monotone Boolean functions is $k$ so that every reasonable continuation of the online algorithm is $c$-competitive in the subtree starting at this leaf. Thus we have $c_3^\star=\frac{5}{2}$.

\begin{figure}
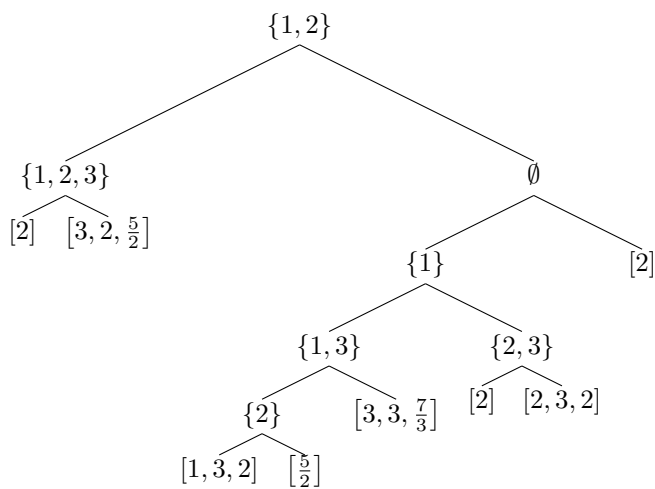

\begin{center}
\Tree [.$\{1,2\}$ 
        [.$\{1,2,3\}$ $[2]$ $\left[3,2,\frac{5}{2}\right]$ ]
        [.$\emptyset$ 
          [.$\{1\}$ 
            [.$\{1,3\}$ 
              [.$\{2\}$ $[1,3,2]$ $\left[\frac{5}{2}\right]$ ] 
              $\left[3,3,\frac{7}{3}\right]$ 
            ] 
            [.$\{2,3\}$ $[2]$ $[2,3,2]$ ] 
          ]
          $[2]$
        ]
      ]
\caption{A class of $\frac{5}{2}$-competitive algorithms for three variables.}
\label{fig_upper_n3}
\end{center}
\end{figure}

\medskip

Using the same ideas and larger trees one can show $c_4^\star=\frac{8}{3}$ and $c_5^\star\ge 3$.

\section{Conclusion}

\noindent
We have considered the problem of minimizing the number of questions to an oracle to completely reconstruct an unknown monotone Boolean function from the perspective of competitive analysis. The classical algorithm of Hansen turns out to perform pretty bad using this measure. For given general monotone Boolean functions bounds on the best possible competitivity are far from being tight. As shown in \cite{1072.06008} a \textit{typical} monotone Boolean function, i.e.\ almost all of $\mathcal{M}_n$ functions, fulfills $m(f)\ge|\mathcal{L}|\ge \frac{1}{2}{n\choose \left\lceil\frac{n}{2}\right\rceil}-n2^{n/2}$. Thus most reasonable algorithms have a constant competitivity on almost all inputs. So the challenge is to deal with those monotone Boolean functions with atypically \textit{small} $m(f)$. On the other hand the ratio between $|\mathcal{L}|$ and $|\mathcal{U}|$ can become exponential, see \cite{10.1007_BF00116828}.

Here we have determined optimal algorithms for rather small problem instances only. On the other hand the described methods and shortcuts may be used in order to implement a non-trivial search to determine the next exact values of $c_n^\star$. (This is indeed what we plan to do next.) A direct exhaustive search on all reasonable binary decision trees seems impracticable even for rather small~$n$. 


\end{document}